\numberwithin{equation}{section}
\newtheorem{Theorem}{Theorem}[section]
\newtheorem{Lemma}[Theorem]{Lemma}
\newtheorem{Proposition}[Theorem]{Proposition}
 { \theoremstyle{definition}

 }
\newcommand{\CC}{\mathbb{C}}
\newcommand{\II}{\mathbb{I}}
\newcommand{\cK}{\mathscr{K}}
\newcommand{\cR}{\mathscr{R}}
\newcommand{\cL}{\mathscr{L}}
\begin{document}

\allowdisplaybreaks

\newcommand{\arXivNumber}{1710.08490}

\renewcommand{\PaperNumber}{094}

\FirstPageHeading

\ShortArticleName{Algebraic Bethe Ansatz for the XXZ Gaudin Models with Generic Boundary}

\ArticleName{Algebraic Bethe Ansatz for the XXZ Gaudin Models\\ with Generic Boundary}

\Author{Nicolas CRAMPE}

\AuthorNameForHeading{N.~Crampe}

\Address{Laboratoire Charles Coulomb (L2C), UMR 5221 CNRS-Universit{\'e} de Montpellier,\\
Montpellier, France}
\Email{\href{mailto:nicolas.crampe@umontpellier.fr}{nicolas.crampe@umontpellier.fr}} % E-mail address of First Author

\ArticleDates{Received November 01, 2017, in f\/inal form December 06, 2017; Published online December 13, 2017}

\Abstract{We solve the XXZ Gaudin model with generic boundary using the modif\/ied algebraic Bethe ansatz.
 The diagonal and triangular cases have been recovered in this general framework. We show that the model
 for odd or even lengths has two dif\/ferent behaviors. The corresponding Bethe equations are computed for all the cases.
 For the chain with even length, inhomogeneous Bethe equations are necessary. The higher spin Gaudin models with generic
 boundary is also treated.}

\Keywords{integrability; algebraic Bethe ansatz; Gaudin models; Bethe equations}

\Classification{81R12; 17B80; 37J35}

\section{Introduction}

The algebraic Bethe ansatz is a powerful method
to solve analytically numerous integrable models \cite{TF}.
Then, the introduction of integrable boundaries in this framework was initiated in \cite{sklyanin}.
However, the problem to solve models with generic integrable boundaries thanks to the algebraic Bethe ansatz has been
overcome only recently \cite{Bel++,Bel,BC,Bel+,cra}.
The problem lies in the fact that there is no simple particular eigenvector, which is normally the starting point of the method.
Therefore, other approaches have been used to deal with this problem:
the functional Bethe ansatz \cite{FraGSW11,Gal08,MurN05, nepo},
the coordinate Bethe ansatz \cite{CRS1+,CRS1}, the separation of variables \cite{niccoli2+,FSW,niccoli2},
the $q$-Onsager approach \cite{BK+,BK} and the matrix ansatz \cite{CRS2, GLMV1+,GLMV1++,GLMV1,GLMV1+++}.
The algebraic Bethe ansatz has been also used in this context \cite{BCR13,CLSW,MMR,pimenta,YanZ07} but constraints on the parameters def\/ining the boundaries
were necessary.
Then, inhomogeneous T-Q relations were introduced in \cite{CYSW1,CYSW1+,CYSW2,Nep} where the authors obtained the eigenvalues and the Bethe equations
for the XXZ model with generic boundaries. Finally, the modif\/ied algebraic Bethe ansatz allows us to compute the associated eigenvectors \cite{Bel++,Bel,BC,Bel+,ZLCYSW}.
Let us mention also that this method has been also exploited to study twisted XXX spin chain in~\cite{BP}
or the rational Gaudin magnets in arbitrarily oriented magnetic f\/ields in \cite{FT}.

In this letter, we solve the Gaudin model (introduced in \cite{gau}) with generic boundaries using the modif\/ied algebraic Bethe ansatz.
The Gaudin model is one of the simplest quantum integrable system and we hope that its resolution will shed more light on the method.
Let us also mention that the algebraic Bethe ansatz was already applied to solve open
Gaudin model using the vertex-IRF correspondence of the XXZ spin chain \cite{HYF,YSZ}. However, as pointed in \cite{CMN}, the model
considered in \cite{HYF,YSZ} is obtained as a limit from the XXZ spin chain and is dif\/ferent of the one studied here which is constructed directly from
a classical $r$-matrix.

This letter is organised as follows. In Section~\ref{sec:model}, we recall well-known results about the Gaudin algebra and model to f\/ix the notations
used in the following. Then, we provide technical results in the Sections~\ref{sec:alg} and~\ref{sec:rep} to implement the algebraic Bethe ansatz.
In Section~\ref{sec:tria}, we recover the results for the diagonal and triangular boundaries \cite{Hik,mano+,mano}. Section~\ref{sec:odd} contains new results for the generic boundaries when the length of the chain is odd. Finally, we deal with the even chain in Section~\ref{sec:even} where
an additional computation is necessary. We explain also why the computation is dif\/ferent depending on the parity of the length of the chain.
In Section~\ref{sec:HS}, we generalize these results to solve the higher spin Gaudin models with generic boundary.

\section{Gaudin models and algebras} \label{sec:model}

In this section, we recall dif\/ferent well-known notions to construct Gaudin models and prove their integrability.

For a matrix $r(x,y)\in \operatorname{End}(\CC^N\otimes \CC^N)$ depending on 2 parameters $x,y$, one def\/ines the fundamental equation, called classical Yang--Baxter equation, given by
 \begin{gather*}%\label{eq:nsCYBE}
 [{r}_{13}(x_1,x_3) , {r}_{23}(x_2,x_3)]=[{r}_{21}(x_2,x_1),{r}_{13}(x_1,x_3)]+[{r}_{23}(x_2,x_3),{r}_{12}(x_1,x_2)].
 \end{gather*}
 In the previous relation, we have used the usual notations: $r_{12}(x) = r(x)\otimes \II$, $r_{23}(x) =\II\otimes r(x) $, \dots,
 where $\II$ is the identity matrix. A solution of the classical Yang--Baxter equation is called a $r$-matrix.
 In addition, if an $r$-matrix $r(x,y)$ depends only on the quotient $x/y$ and satisf\/ies the supplementary relations $r_{12}(x,y)= r_{12}(x/y)=-r_{21}(y/x)=-r_{12}(y/x)^{t_1t_2}$
 then it is called a skew-symmetric $r$-matrix.
 For a given skew-symmetric $r$-matrix $r(x/y)$, one def\/ines also the ref\/lection equation
 \begin{gather*}%\label{eq:re}
 r_{12}(x/y) k_1(x) k_2(y)-k_1(x) k_2(y)r_{21}(x/y)= k_2(y)r_{12}(xy)k_1(x) - k_1(x) r_{21}(xy) k_2(y).
 \end{gather*}
A solution $k(x)$ of the ref\/lection equation is called $k$-matrix.
 From a skew-symmetric $r$-matrix and an associated $k$-matrix, we can construct a new $r$-matrix \cite{Skr}
 \begin{gather}\label{def:rb}
 \overline{r}_{12}(x,y)=r_{12}(x/y)-k_1(x) r_{12}(1/(xy)) k_1(x)^{-1} ,
 \end{gather}
 which is not skew-symmetric.

 To each $r$-matrix $r(x,y)$, on can associate a Lie algebra $\cR$.
 Indeed, if ${\cK}(x)\in \operatorname{End}\big(\CC^N\big) \otimes \cR$, then the following relation
\begin{gather}\label{eq:Al}
 [ {\cK}_{1}(x) , {\cK}_{2}(y) ]=[ {r}_{21}(y,x) , {\cK}_{1}(x) ]+[ {\cK}_{2}(y) , {r}_{12}(x,y) ]
\end{gather}
def\/ines a Lie commutator for $\cR$. Indeed, the antisymmetry of the product is obvious and the Jacobi identity is satisf\/ied due to the classical Yang--Baxter equation. For any invertible matrix $M(x)\in \operatorname{End}\big(\CC^N\big)$, there is a Lie algebra isomorphism given by
\begin{align*}
 \Phi_M\colon \quad \quad \cR&\rightarrow \cR_M, \\
 {\cK_0}(x)&\mapsto M_0(x)^{-1} \cK_0(x) M_0(x),
\end{align*}
where $\cR_M$ is the Lie algebra def\/ined by \eqref{eq:Al} with the $r$-matrix
\begin{gather*}
 M_1(x)^{-1}M_2(y)^{-1} r_{12}(x,y)M_1(x)M_2(y).
\end{gather*}
From the def\/ining relations \eqref{eq:Al}, one can show that the transfer matrix
\begin{gather}\label{eq:tr}
 t(x)=\frac{1}{4}\operatorname{tr}_0\big({\cK_0}(x)^2\big)
\end{gather}
satisf\/ies $ [t(x),t(y)]=0$. Then, when a representation for $\cR$ is chosen, the coef\/f\/icients of $t(x)$ provide an integrable hierarchy.
Let us remark that, for a given matrix $M(x)$, $\cR$ and $\cR_M$ give the same hierarchy since it is easy to see that $\operatorname{tr}\big({\cK_0}(x)^2\big)= \operatorname{tr}\big({\Phi_M(\cK_0}(x))^2\big)$.

In this letter, we focus on the skew-symmetric $r$-matrix associated to the af\/f\/ine Kac--Moody algebra $\widehat{{\mathfrak{sl}}_2}$:
\begin{gather}\label{def:r}
 r(x)=\frac{1}{x-1}\begin{pmatrix}
 -\frac{1}{2}(x+1)&0&0&0\\
 0&\frac{1}{2}(x+1)& -2&0\\
 0&-2x& \frac{1}{2}(x+1) &0\\
 0&0&0&-\frac{1}{2}(x+1)
 \end{pmatrix}.
\end{gather}
The most general $k$-matrix associated to this $r$-matrix is given by
\begin{gather}\label{eq:k2}
 k(x)=\begin{pmatrix}
 \beta+\gamma/x& - \frac{\alpha(\beta+\rho)}{2}(x-1/x)\\
 \frac{\beta-\rho}{2\alpha}(x-1/x)&\beta+\gamma x&
 \end{pmatrix},
\end{gather}
where $\alpha$, $\beta$, $\gamma$ and $\rho$ are scalar parameters. As explained previously,
the $r$-matrix and the $k$-matrix allow us to construct a new $r$-matrix (see relation~\eqref{def:rb}).
Without loss of generality, we can use the Lie algebra isomorphism $\Phi_M$ to transform this $r$-matrix.
Therefore, we use in this paper the following $r$-matrix
\begin{gather}\label{eq:rt}
\widetilde{r}_{12}(x,y)=M_1(x)^{-1}M_2(y)^{-1}\big( r_{12}(x/y)-k_1(x) r_{12}(1/(xy)) k_1(x)^{-1} \big)M_1(x)M_2(y),
\end{gather}
where $r(x)$ is given by \eqref{def:r}, $k(x)$ is given by \eqref{eq:k2} and
\begin{gather*}
 M(x)=\begin{pmatrix}
 \frac{\beta+\rho}{2\rho} & \frac{\alpha}{x}\\
 \frac{(\beta-\rho)x}{2\alpha\rho} & 1
 \end{pmatrix}.
\end{gather*}
Explicitly, this $r$-matrix reads as
\begin{gather*}
 \widetilde{r}(x,y)=\begin{pmatrix}
 -\frac{1}{2}\omega(x,y) \sigma^z&b(x) \sigma^z+ f(y,x) \sigma^-\\
 c(x) \sigma^z
 - f(1/y,1/x) \sigma^+&\frac{1}{2}\omega(x,y)\sigma^z
 \end{pmatrix},
\end{gather*}
where $\sigma^z=\begin{pmatrix}
 1&0\\
 0&-1
 \end{pmatrix}$, $\sigma^+=\begin{pmatrix}
 0&1\\
 0&0
 \end{pmatrix}$ and $\sigma^-=\begin{pmatrix}
 0&0\\
 1&0
 \end{pmatrix}$ and
 \begin{gather}
 \delta(x)=\frac{( (\beta-\rho)x^2 + 2\gamma x +\rho + \beta)}{2\rho (x-1/x)},\qquad b(x)=-\frac{\alpha}{\delta(x)} , \qquad
c(x)=\frac{\rho^2-\beta^2}{4\alpha\rho^2 \delta(1/x)},\label{eq:defbc} \\
 \omega(x,y)=\frac{x+y}{x-y}+\frac{xy+1}{xy-1} \qquad\text{and}\qquad f(x,y)=\omega(x,y) \frac{\delta(x)}{\delta(y)}.\label{eq:defof}
 \end{gather}
 Let us emphasize that the $r$-matrix $\widetilde r(x,y)$ has a simpler form than the one without the isomorphism
 (i.e., $r_{12}(x/y)-k_1(x) r_{12}(1/(xy)) k_1(x)^{-1}$). Without the isomorphism, the 16 entries of the $r$-matrix should be non-zero whereas
 $\widetilde r(x,y)$ has 6 vanishing entries and each two by two block submatrices are diagonal or triangular.

As explained previously, the $r$-matrix $\widetilde r(x,y)$ allows one to def\/ine a Lie algebra by using relation \eqref{eq:Al} with
\begin{gather}\label{eq:rep0}
 \widetilde{\cK_0}(x)= \begin{pmatrix}
 A(x) & \widetilde B(x) \\
 \widetilde C(x) & -A(x)
 \end{pmatrix} .
\end{gather}
We denote by $\widetilde{\cR}$ this Lie algebra.
Let us mention that the Lie algebra $\widetilde{\cR}$ for special choice of the parameters in the $k$-matrix has been
identif\/ied as Onsager algebras \cite{BBC17}\footnote{To compare this letter with \cite{BBC17}, we must use the additional property
$r_{12}(x)^{t_1}=\sigma^y_1r_{12}(x)\sigma^y_1$ satisf\/ied by the $r$-matrix \eqref{def:r}.}.

Finally, let us introduce the following representation\footnote{We keep the same notation $\widetilde{\cK}(x)$
for the algebraic element or its representation.}
\begin{gather}\label{eq:rep}
 \widetilde{\cK}_0(x)=\sum_{j=1}^L \widetilde{r}_{0j}(x,v_j).
\end{gather}
By using the classical Yang--Baxter equation, it is obvious to show that \eqref{eq:rep} satisf\/ies the commutation relations of $\widetilde{\cR}$.
The parameters $v_i$ are called inhomogeneous parameters.
The Hamiltonians are usually def\/ined by
\begin{gather*}
 \widetilde{H}_j= \operatorname{Res}_{x=v_j} t(x)=\frac{1}{4}\operatorname{Res}_{x=v_j}\operatorname{tr}_0\big({\widetilde{\cK_0}}(x)^2\big)\\
 \hphantom{\widetilde{H}_j}{} =-v_j \sum_{\genfrac{}{}{0pt}{1}{p=1}{p\neq j}}^L \widetilde{r}_{jp}(v_j,v_p)
 +Lv_j\left(\frac{3v_j^2}{v_j^2-1}-\nu(v_j)\right), \label{eq:ham}
\end{gather*}
where
\begin{gather}
 \nu(z)=\frac{z^2(\rho^2-\beta^2)-2\beta\gamma z+\rho^2-\beta^2}{2\rho^2(z^2-1)\delta(z)\delta(1/z)}.\label{eq:nu}
\end{gather}
These Hamiltonians are integrable due to the commutation of the transfer matrix and $[\widetilde{H}_j,\widetilde{H}_k]$ $=0$. They are related to the Gaudin models with boundary (or also called BC Gaudin models). Indeed, by conjugating the previous Hamilonians by $M_1(v_1)\cdots M_L(v_L)$, one gets the Hamiltonians
\begin{gather*}
 H_j=-v_j \sum_{\genfrac{}{}{0pt}{1}{p=1}{p\neq j}}^L \left( r_{jp}\left(\frac{v_j}{v_p}\right) + k_j(v_j) r_{jp}\left(\frac{1}{v_jv_p}\right) k_j(v_j)^{-1} \right)
 +Lv_j\left(\frac{3v_j^2}{v_j^2-1}-\nu(v_j)\right), %\label{eq:ham2}
\end{gather*}
which are the usual form for the XXZ Gaudin model with boundary.

\section{Modif\/ied algebraic Bethe ansatz} \label{sec:maba}

In this section, we want to diagonalize the matrix $\frac{1}{4}\operatorname{tr}_0\big({\widetilde{\cK_0}}(x)^2\big)$ when the 4 parameters in the $k$-matrix are generic.
The diagonalisation when the $k$-matrix is diagonal was obtained in \cite{Hik} and for the triangular case in \cite{mano+,mano}.
For the generic case, the situation is more complicated. Here, we use the modif\/ied algebraic Bethe ansatz,
introduced and used in dif\/ferent contexts \cite{Bel++,Bel,BC,Bel+,cra}, to compute the eigenvalues and the eigenvectors for the generic case.
The f\/irst two subsections are technical. Then, Section~\ref{sec:tria} contains the known results about the diagonal or triangular boundaries
recovered here and Sections~\ref{sec:odd} and~\ref{sec:even} contains the new results for the chain of length odd or even.

\subsection{Algebraic relations \label{sec:alg}}

In this subsection, by using the commutation relations obtained from \eqref{eq:Al}, we provide necessary propositions to
perform the algebraic Bethe ansatz.
Firstly, let us mention that the commutation relations between $ A(x)$, $\widetilde B(x)$ and $\widetilde C(x)$ given by \eqref{eq:Al}
with the $r$-matrix \eqref{eq:rt} and
$\widetilde \cK(x)$ given by \eqref{eq:rep0} are quite complicated in comparison to the usual ones.
This is why we introduce the two following shifted generators
\begin{gather*}
 {B}(x,n)= \widetilde B(x)-(2n-1) b(x)\qquad\text{and}\qquad
 {C}(x,n)= \widetilde C(x)-(2n-1)c(x) ,
\end{gather*}
where $b(x)$ and $c(x)$ are def\/ined by \eqref{eq:defbc}. For these generators and the generators $ A(x)$, the commutation relations become
\begin{gather}
{A}(x){A}(y)={A}(y){A}(x),\label{eq:com1}\\
{B}(x,n){B}(y,n+1)={B}(y,n){B}(x,n+1),\label{eq:com2}\\
{C}(x,n){C}(y,n-1)={C}(y,n){C}(x,n-1),\label{eq:com3}\\
{A}(x){B}(y,m)={B}(y,m){A}(x)+\omega(x,y) {B}(y,m) -f(x,y){B}(x,m),\label{eq:com4}\\
{A}(x){C}(y,m)={C}(y,m){A}(x)-\omega(x,y) {C}(y,m) -f(1/x,1/y){C}(x,m),\label{eq:com5}\\
{C}(x,n){B}(y,n)={B}(y,n+1){C}(x,n+1) \nonumber\\
\hphantom{{C}(x,n){B}(y,n)=}{} +2f(x,y){A}(x) -2f(1/y,1/x){A}(y) - 8n b(y)c(x), \label{eq:com6}
\end{gather}
where $\omega(x,y)$ and $f(x,y)$ are def\/ined by \eqref{eq:defof}.

We introduce also a shifted transfer matrix
\begin{gather}\label{eq:deftau}
{t}(x,j)= \frac{1}{2}{A}(x)^2+\frac{1}{4}{B}(x,j+1){C}(x,j+1) +\frac{1}{4}{C}(x,j){B}(x,j)-\frac{1}{2}b(x)c(x).
\end{gather}
For $j=0$ in the previous relation, the transfer matrix \eqref{eq:tr} is recovered: $t(x)=\frac{1}{4} \operatorname{tr}_0 \widetilde{\cK_0}(x)^2 ={t}(x,0)$.

Secondly, we present a lemma permitting to commute the shifted transfer matrix with the generator $B(x,p)$.
This lemma is standard in the context of the algebraic Bethe ansatz for the Gaudin model except for the
modif\/ication of the shift for the transfer matrix.
\begin{Lemma}\label{lem:tB}
If the algebraic relations \eqref{eq:com1}--\eqref{eq:com6} are satisfied, the following relation holds
\begin{gather}
 {t}(x,p-1) {B}(z,p)- {B}(z,p){t}(x,p) = {B}(z,p)\big( \omega(x,z)(A(x)+\nu(x)) +2b(x)c(x) \big)\nonumber\\
\qquad{}-{B}(x,p)\big( f(1/z,1/x)(A(z)+\nu(z))+4(p-1)c(x)b(z) \big),\label{eq:tb}
\end{gather}
where $\nu(z)$ is defined by \eqref{eq:nu}.
\end{Lemma}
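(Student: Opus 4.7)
The plan is a direct computation using the commutation relations \eqref{eq:com1}--\eqref{eq:com6}. I want to move $B(z,p)$ from the right of $t(x,p-1)$ through to the left of $t(x,p)$, accounting for the shifts $p-1 \to p$ and $p \to p+1$ that the definition \eqref{eq:deftau} introduces. There are three nontrivial reorderings to perform on the three operator pieces of $t(x,p-1)$:

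\begin{enumerate}
\item For $\tfrac12 A(x)^2 B(z,p)$, apply \eqref{eq:com4} twice.
\item For $\tfrac14 C(x,p-1) B(x,p-1) B(z,p)$, first use \eqref{eq:com2} at $n=p-1$ to rewrite $B(x,p-1) B(z,p) = B(z,p-1) B(x,p)$, then apply \eqref{eq:com6} at $n=p-1$ to the resulting $C(x,p-1) B(z,p-1)$.
\item For $\tfrac14 B(x,p) C(x,p) B(z,p)$, first apply \eqref{eq:com6} at $n=p$ to $C(x,p) B(z,p)$, then use \eqref{eq:com2} at $n=p$ to rewrite $B(x,p) B(z,p+1) = B(z,p) B(x,p+1)$.
\end{enumerate}

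The three ``diagonal'' contributions $B(z,p) A(x)^2$, $B(z,p) B(x,p+1) C(x,p+1)$ and $B(z,p) C(x,p) B(x,p)$, together with the scalar $-\tfrac12 b(x) c(x) B(z,p)$, reconstitute $B(z,p)\, t(x,p)$ on the RHS. I then collect the residual terms, which come in only a few flavors: $B(z,p)A(x)$, $B(z,p)$ (scalar coefficient), $B(x,p)A(x)$ and $A(x)B(x,p)$, $A(z)B(x,p)$, and $B(x,p)$ (scalar coefficient). A quick bookkeeping check shows the $A(x)B(x,p)$ and $B(x,p)A(x)$ contributions, which arise from expanding $A(x)^2 B(z,p)$ and from applying \eqref{eq:com6} inside steps 2 and 3, cancel pairwise. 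One final use of \eqref{eq:com4} (with $x \leftrightarrow z$) converts $A(z) B(x,p)$ into $B(x,p) A(z)$ plus two scalar corrections proportional to $B(x,p)$ and $B(z,p)$; this is what produces the mixed $B(x,p) A(z)$ structure that appears on the RHS of \eqref{eq:tb}.

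The coefficients of $B(z,p)A(x)$ and $B(x,p)A(z)$ then match $\omega(x,z)$ and $-f(1/z,1/x)$ on the nose, and the $p$-dependent pieces $-8p\,b(z)c(x)$ and $-8(p-1)b(z)c(x)$ from the two applications of \eqref{eq:com6} combine, together with a $p$-independent residue, to give the advertised $-4(p-1)c(x)b(z) B(x,p)$ term. What is left is the verification of two purely scalar identities, one multiplying $B(z,p)$ and one multiplying $B(x,p)$, of the schematic shape
\begin{gather*}
\omega(x,z)\nu(x) = \tfrac{1}{2}\omega(x,z)^2 + \tfrac{1}{2} f(1/z,1/x) f(z,x) - 2 b(x)c(x), \\
f(1/z,1/x)\nu(z) = \tfrac{1}{2}\omega(x,z) f(x,z) + \tfrac{1}{2} f(1/z,1/x) \omega(z,x) + 2 b(z) c(x).
\end{gather*}

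I expect the main obstacle to be precisely these two scalar identities: the operator reshuffling is mechanical, but extracting $\nu(x)$ and $\nu(z)$ from combinations of $\omega$, $f$, $b$ and $c$ is what \emph{forces} the specific form \eqref{eq:nu} of $\nu$. Once this is checked by direct substitution from \eqref{eq:defbc}--\eqref{eq:defof} and \eqref{eq:nu} (essentially a rational-function identity in $x$, $z$, and the boundary parameters $\alpha,\beta,\gamma,\rho$), the lemma follows. The rest of the argument is pure bookkeeping with the commutation relations \eqref{eq:com1}--\eqref{eq:com6}.
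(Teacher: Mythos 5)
Your proposal is correct and follows essentially the same route as the paper: a direct reordering via \eqref{eq:com1}--\eqref{eq:com6} that reconstitutes $B(z,p)t(x,p)$, leaving exactly the intermediate expression the paper writes down, after which everything reduces to two rational-function identities. The two scalar identities you isolate at the end are precisely (up to distributing the factor $\tfrac12$) the two ``functional relations'' invoked in the paper's proof, and your bookkeeping of the $p$-dependent terms $-2(2p-1)b(z)c(x)+2b(z)c(x)=-4(p-1)b(z)c(x)$ matches the stated coefficient.
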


\begin{proof}
By straightforward computations using the commutation relations \eqref{eq:com1}--\eqref{eq:com6}, one gets
 \begin{gather*}
 {t}(x,p-1) {B}(z,p)- {B}(z,p){t}(x,p)= \omega(x,z) {B}(z,p) {A}(x) -f(1/z,1/x) {B}(x,p) {A}(z)\nonumber \\
 \qquad{}+\frac{1}{2}\big( \omega(x,z)^2+f(z,x)f(1/z,1/x) \big) {B}(z,p)\nonumber\\
 \qquad{}-\frac{1}{2}\big(\omega(z,x)f(1/z,1/x)+\omega(x,z) f(x,z) +4(2p-1)c(x)b(z) \big) {B}(x,p).
\end{gather*}
Then, by using the following functional relations
\begin{gather*}
 \omega(x,z)^2+f(z,x)f(1/z,1/x)=2\nu(x)\omega(x,z)+4b(x)c(x),\\
 \omega(z,x)f(1/z,1/x)+\omega(x,z) f(x,z)=2\nu(z)f(1/z,1/x)-4c(x)b(z)
\end{gather*}
one gets relation \eqref{eq:tb}.
\end{proof}

Let us def\/ine the following product
\begin{gather*}
\mathbb{B}(\boldsymbol{z})= {B}(z_1,1) {B}(z_2,2)\cdots {B}(z_M,M),
\end{gather*}
where $\boldsymbol{z}=\{z_1,z_2,\dots,z_M\}$.
Let us remark that, due to relation \eqref{eq:com2}, $\mathbb{B}(\boldsymbol{z})$ is invariant under any permutation of the $z_i$ which justif\/ies
the def\/inition of $\boldsymbol{z}$ as a set.
One introduces also, for $k=1,2,\dots,M$
\begin{gather*}
\mathbb{B}(\boldsymbol{z}_k,x)= {B}(z_1,1)\cdots {B}(z_{k-1},k-1) {B}(x,k) {B}(z_{k+1},k+1)\cdots {B}(z_M,M).
\end{gather*}
Finally, we are in position to compute the action of the transfer matrix with the product~$\mathbb{B}(\boldsymbol{z})$:
\begin{Proposition}\label{prop:1}
If the algebraic relations \eqref{eq:com1}--\eqref{eq:com6} are satisfied, the following relation holds
\begin{gather*}
t(x)\mathbb{B}(\boldsymbol{z}) = \frac{1}{2} \mathbb{B}(\boldsymbol{z})B(x,M+1)C(x,M+1)\\ %\label{eq:tB}\\
\hphantom{t(x)\mathbb{B}(\boldsymbol{z}) =}{} +\mathbb{B}(\boldsymbol{z})\left(\Lambda(x)
+\sum_{p=1}^M\omega(x,z_p)\left(A(x)+\nu(x) +\sum_{q\neq p}\frac{(x-1/x)\omega(z_p,z_q)}{z_p-1/z_p}\right)\right)\nonumber\\
\hphantom{t(x)\mathbb{B}(\boldsymbol{z}) =}{} -\sum_{p=1}^M \mathbb{B}(\boldsymbol{z}_p,x)f(1/z_p,1/x)\left(A(z_p)+\nu(z_p) +\sum_{q\neq p}\omega(z_p,z_q)\right),\nonumber
\end{gather*}
where
\begin{gather*}
 \Lambda(x)=\frac{1}{2}A(x)^2+xA'(x)+\nu(x)A(x)-\frac{x^2+1}{x^2-1}A(x) -\frac{1}{2}b(x)c(x) .
\end{gather*}
\end{Proposition}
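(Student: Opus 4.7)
The plan is to iterate Lemma~\ref{lem:tB} to push $t(x)=t(x,0)$ through the product $\mathbb{B}(\boldsymbol{z})$ term by term, and then compute the endpoint contribution $\mathbb{B}(\boldsymbol{z})\,t(x,M)$ via a careful $y\to x$ limit of~\eqref{eq:com6}. Applying Lemma~\ref{lem:tB} with $z=z_p$ for each $p=1,2,\ldots,M$ in turn transports the shift from $t(x,0)$ to $t(x,M)$ through the chain, yielding $\mathbb{B}(\boldsymbol{z})\,t(x,M)$ together with, at every step $p$, a ``wanted'' contribution carrying the factor $B(z_p,p)$ and an ``unwanted'' contribution carrying $B(x,p)$, the latter also picking up a scalar $-4(p-1)c(x)b(z_p)$.

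Each of these contributions contains an operator $A(x)$ (from the wanted side) or $A(z_p)$ (from the unwanted side) sitting between the shifted $B$'s and the tail $B(z_{p+1},p+1)\cdots B(z_M,M)$. I would move these operators to the right using~\eqref{eq:com4}, which at each adjacent commutation splits the result into three pieces: a ``pass-through'' contribution keeping the $A$, a scalar $\omega$-contribution consuming the $A$, and a $B(x,q)$ (respectively $B(z_p,q)$) contribution in which one of the $B$'s is displaced. By the permutation symmetry supplied by~\eqref{eq:com2}, $\mathbb{B}(\boldsymbol{z})$ is invariant under arbitrary permutations of its first arguments among the positions $1,\ldots,M$, so every displaced product can be reshuffled into the canonical form $\mathbb{B}(\boldsymbol{z}_r,x)$ for an appropriate index $r$.

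Collecting coefficients, agreement with the statement reduces to two scalar functional identities. Matching the coefficient of $\mathbb{B}(\boldsymbol{z}_p,x)$ requires an identity of the type
\begin{gather*}
f(1/z_q,1/x)\,f(z_q,z_p)-\omega(x,z_q)\,f(x,z_p)+f(1/z_p,1/x)\,\omega(z_p,z_q)=4c(x)b(z_p),
\end{gather*}
which, summed over $q<p$, cancels the $-4(p-1)c(x)b(z_p)$ inherited from Lemma~\ref{lem:tB} and symmetrises the partial sum $\sum_{q>p}\omega(z_p,z_q)$ into the required $\sum_{q\neq p}\omega(z_p,z_q)$. Matching the coefficient of $\mathbb{B}(\boldsymbol{z})$ requires the analogous identity turning the symmetric combination $\frac{1}{2}\sum_{p\neq q}\omega(x,z_p)\omega(x,z_q)$ into $\sum_{p\neq q}\omega(x,z_p)\frac{(x-1/x)\omega(z_p,z_q)}{z_p-1/z_p}$; both identities are direct consequences of the explicit formulae~\eqref{eq:defbc}--\eqref{eq:defof}.

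It remains to compute $\mathbb{B}(\boldsymbol{z})\,t(x,M)$ using~\eqref{eq:deftau}. Applying~\eqref{eq:com6} with $n=M$ and taking the coincidence limit $y\to x$ converts $C(x,M)B(x,M)$ into $B(x,M+1)C(x,M+1)$ plus a finite remainder: the functions $f(x,y)$ and $f(1/y,1/x)$ each carry a simple pole at $y=x$ with common residue, and the singularities cancel in $2f(x,y)A(x)-2f(1/y,1/x)A(y)$ once one Taylor expands $A(y)=A(x)+(y-x)A'(x)+\cdots$ together with $\delta(y)$ and $\delta(1/y)$. The finite limit yields the $xA'(x)$ term of $\Lambda(x)$, while the surviving $A(x)$-contributions collapse, thanks to the explicit form~\eqref{eq:nu} of $\nu(x)$, into $\nu(x)A(x)-\frac{x^2+1}{x^2-1}A(x)$; the scalar $-8Mb(y)c(x)$ cancels exactly the $+2Mb(x)c(x)$ accumulated from the $M$ wanted contributions, and the combination $\tfrac14 B(x,M+1)C(x,M+1)+\tfrac14 C(x,M)B(x,M)$ delivers the prefactor $\tfrac12 B(x,M+1)C(x,M+1)$ of the statement. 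The main obstacle is the bookkeeping in the previous paragraph: the $\mathbb{B}(\boldsymbol{z}_p,x)$ coefficient receives contributions from three distinct sources (Lemma~\ref{lem:tB}, and the $A(x)$ and $A(z_p)$ commutations), and only after exploiting the permutation symmetry and the stated functional identities do they reassemble into the clean symmetric form; the coincidence limit of Step~4 is delicate due to competing poles but is analogous to the classical Gaudin calculation.
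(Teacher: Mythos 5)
Your proposal is correct and follows essentially the same route as the paper's proof: the telescopic decomposition via Lemma~\ref{lem:tB}, pushing the $A$'s to the right with~\eqref{eq:com4}, the two functional identities (yours are the paper's up to relabelling $p\leftrightarrow q$ and rearrangement), and the $y\to x$ limit of~\eqref{eq:com6} combined with~\eqref{eq:deftau} to handle $t(x,M)$. The extra detail you give on the pole cancellation in the coincidence limit and on the cancellation of the $b\,c$ scalars is consistent with, and merely elaborates, the paper's argument.
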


\begin{proof}
One gets, using a telescopic sum, the following relation
\begin{gather*}
{t}(x,0)\mathbb{B}(\boldsymbol{z})-\mathbb{B}(\boldsymbol{z}){t}(x,M) = \sum_{p=1}^M
 {B}_1\cdots {B}_{p-1}
\left( {t}(x,p-1) {B}_p- {B}_p{t}(x,p) \right)
 {B}_{p+1}\cdots {B}_M,
\end{gather*}
where $ {B}_p= {B}(z_p,p)$. Then, using the Lemma~\ref{lem:tB}, we express ${t}(x,p-1) {B}_p- {B}_p{t}(x,p)$
in the previous sum.
Then, we use repetitively relation \eqref{eq:com4} to put the operators $A$ on the right.
Then, we rearrange the terms using the following relations
\begin{gather*}
 \omega(x,z_p)f(x,z_q)-f(1/z_p,1/x)f(z_p,z_q)=\omega(z_q,z_p)f(1/z_q,1/x)-4c(x)b(z_q),\\
 \sum_{q>p}\omega(x,z_p)\omega(x,z_q)= \sum_{q\neq p} \frac{x-1/x}{z_p-1/z_p}\omega(z_p,z_q)\omega(x,z_p).
\end{gather*}
Finally, ${t}(x,M)$ is simplif\/ied using relation \eqref{eq:deftau} and \eqref{eq:com6} for $y\rightarrow x$ to prove
the proposition.
\end{proof}

Let us emphasize again that this type of computation and result are the same for the periodic or diagonal Gaudin model. The only dif\/ference lies on
the shift in the operators.

\subsection{Explicit representation} \label{sec:rep}

Up to now, we have only used the commutation relations of the algebra to prove the previous results.
Now, we choose the representation for $\widetilde{\cK}(x)$ given by \eqref{eq:rep}.
Therefore, we get explicitly
\begin{gather}
 A(x) = -\frac{1}{2}\sum_{j=1}^L\omega(x,v_j) \sigma^z_j,\nonumber\\
 B(x,n) = (1-2n) b(x)+b(x)\sum_{j=1}^L \sigma^z_j+\sum_{j=1}^L f(v_j,x) \sigma^-_j,\label{eq:repB}\\
 C(x,n) = (1-2n) c(x)+c(x) \sum_{j=1}^L \sigma^z_j -\sum_{j=1}^L f(1/v_j,1/x) \sigma^+_j.\nonumber%\label{eq:repC}
\end{gather}
Let us introduce the following vectors
$
 \Omega=\begin{pmatrix}
 1\\0
 \end{pmatrix}^{\otimes L}
$, called usually pseudo-vacuum. We deduce that{\samepage
\begin{gather}
 A(x)\Omega=a(x)\Omega \qquad \text{and}\qquad C(x,n)\Omega=(L+1-2n)c(x)\Omega\label{eq:rep3}
\end{gather}
with $a(x)=-\frac{1}{2}\sum\limits_{j=1}^L \omega(x,v_j)$.}

The Bethe vectors are def\/ined as follows
\begin{gather}\label{eq:BV}
 \mathbb{V}(\boldsymbol{z})=\mathbb{B}(\boldsymbol{z})\Omega.
\end{gather}
These Bethe vectors are of course invariant under any permutation of the $z_i$ since $\mathbb{B}(\boldsymbol{z})$ and $\mathbb{C}(\boldsymbol{z})$ are invariant.
From the previous results (Proposition \ref{prop:1} and relations \eqref{eq:rep3}), we get easily the action of the transfer matrix on these Bethe vectors
\begin{gather}
 t(x)\mathbb{V}(\boldsymbol{z}) = \frac{L-2M-1}{2}c(x)\mathbb{B}(\boldsymbol{z})B(x,M+1)\Omega\nonumber\\
\hphantom{t(x)\mathbb{V}(\boldsymbol{z}) =}{} + \left(\lambda(x)
+\sum_{p=1}^M\omega(x,z_p)\left(a(x)+\nu(x) +\sum_{q\neq p}\frac{(x-1/x)\omega(z_p,z_q)}{z_p-1/z_p}\right)\right)\mathbb{V}(\boldsymbol{z})\nonumber\\
\hphantom{t(x)\mathbb{V}(\boldsymbol{z}) =}{} -\sum_{p=1}^M f(1/z_p,1/x)\left(a(z_p)+\nu(z_p) +\sum_{q\neq p}\omega(z_p,z_q)\right)\mathbb{V}(\boldsymbol{z}_p,x),\label{eq:offsg}
\end{gather}
where
\begin{gather*}
 \lambda(x)=\frac{1}{2}a(x)^2+xa'(x)+\nu(x)a(x)-\frac{x^2+1}{x^2-1}a(x) -\frac{1}{2}b(x)c(x) .
\end{gather*}
The second term in \eqref{eq:offsg} is the wanted term (i.e., is proportional to the Bethe vector) and the third term is the unwanted term.
These terms are already present for the periodic or triangular case.
For the generic boundary, we must deal with the f\/irst term in the previous relation \eqref{eq:offsg}.
There are dif\/ferent possibilities:
\begin{itemize}\itemsep=0pt
 \item $c(x)=0$, i.e., $\rho=\pm \beta$. This case corresponds to triangular (or diagonal) boundaries and the Bethe ansatz works as usual.
 We get dif\/ferent sectors depending on the number $M$ of excitations (see Section \ref{sec:tria}).
 \item The length of chain is odd and $M=\frac{L-1}{2}$. It corresponds to a case where the number of excitations is f\/ixed. From relation \eqref{eq:offsg}, one gets only
 the half of the spectrum. The other half is obtained starting from $t(x)\overline{\mathbb{V}}(\boldsymbol{z})$ (see Section \ref{sec:odd}).
 \item The length of the chain is even. In this case, we must compute $\mathbb{B}(\boldsymbol{z})B(x,M+1)\Omega$.
 It corresponds to the modif\/ied algebraic Bethe ansatz (see Section \ref{sec:even}).
\end{itemize}

At f\/irst sight, the dif\/ference of the behavior for the lengths odd or even seems strange. However, we can show that, in the case of odd length,
two subspaces of $(\CC^2)^{\otimes L}$ are stabilized by the transfer matrix. More precisely, if we denote by $V_J$
the eigenspace of the total spin $S^z=\frac{1}{2}\sum\limits_{j=1}^L \sigma^z_j $
with eigenvalue $J$, one gets
\begin{gather}
 t(x) V_{1/2} \oplus V_{3/2}\oplus\dots \oplus V_{L/2} \subset V_{1/2} \oplus V_{3/2}\oplus\dots \oplus V_{L/2},\nonumber %\label{eq:sec1}
 \\
 t(x) V_{-1/2} \oplus V_{-3/2}\oplus\dots \oplus V_{-L/2} \subset V_{-1/2} \oplus V_{-3/2}\oplus\dots \oplus V_{-L/2}.\label{eq:sec2}
\end{gather}
Then it is natural that the spectrum for odd chain splits into two sectors.
For the chain of even length, this feature disappears: there is no stable subspace.

\subsection{Triangular and diagonal boundaries} \label{sec:tria}

In this subsection, we deal with the triangular boundary (i.e., $\rho=\pm\beta$).
In this case, we can easily see that one gets, for $M=0,1,\dots, L$,
\begin{gather*}
 t(x) V_{L/2-M} \oplus V_{L/2-M+1}\oplus\cdots \oplus V_{L/2} \subset V_{L/2-M} \oplus V_{L/2-M+1}\oplus\cdots \oplus V_{L/2}.
\end{gather*}
For $M=0,1,\dots,L$, the Bethe vectors $\mathbb{V}(\boldsymbol{z})$ def\/ined by \eqref{eq:BV} is just in the sector
$V_{L/2-M} \oplus\cdots \oplus V_{L/2}$ and becomes an eigenvector of the transfer matrix if the
 Bethe roots $z_i$ satisfy the Bethe equations:
\begin{gather}\label{eq:bethed}
 a(z_p)+\nu(z_p) +\sum_{\genfrac{}{}{0pt}{1}{q=1}{q\neq p}}^M\omega(z_p,z_q) =0 \qquad\text{for} \quad p=1,2,\dots,M.
\end{gather}
Explicitly, they read, for $p=1,\dots,M$,
\begin{gather}
-\frac{1}{2}\sum_{j=1}^L\left(\frac{z_p+v_j}{z_p-v_j}+\frac{z_pv_j+1}{z_pv_j-1}\right)
+\frac{1}{2}\left(\frac{\gamma z-\beta}{\gamma z+\beta}+\frac{\beta z-\gamma}{\beta z+\gamma} \right)\nonumber\\
\qquad {} +\sum_{\genfrac{}{}{0pt}{1}{q=1}{q\neq p}}^M\left(\frac{z_p+z_q}{z_p-z_q}+\frac{z_pz_q+1}{z_pz_q-1}\right)=0.\label{eq:bd}
\end{gather}
The corresponding eigenvalue of the transfer matrix $t(x)=\frac{1}{4}\operatorname{tr}_0\big(\widetilde \cK_0(x)^2\big)$ is
\begin{gather}\label{eq:vpd}
 \lambda(x)
+\sum_{p=1}^M\omega(x,z_p)\left(a(x)+\nu(x) +\sum_{\genfrac{}{}{0pt}{1}{q=1}{q\neq p}}^M\frac{(x-1/x)\omega(z_p,z_q)}{z_p-1/z_p}\right).
\end{gather}

Starting from the triangular boundary, we recover the diagonal one for $\alpha=0$.
We see that the eigenvalue \eqref{eq:vpd} and the Bethe equations \eqref{eq:bd}
do not depend on $\alpha$. Then, as shown previously for the Gaudin model \cite{mano+,mano}
(or for other models \cite{BCR13,pimenta}), the spectrum for triangular boundaries is similar to the one with diagonal boundaries.
Evidently, the Bethe eigenvectors depends on $\alpha$, via the function $b(x)$ present in $B(x,j)$ (see~\eqref{eq:repB}).
In particular, the function $b(x)$ vanishes for $\alpha=0$ and the Bethe vectors
$\mathbb{V}(\boldsymbol{z})$ are now in the sector $V_{L/2-M}$ which is consistent with
\begin{gather*}
t(x) V_{L/2-M} \subset V_{L/2-M}.
\end{gather*}
The last equation means that the total spin is conserved by the transfer matrix in the case of diagonal boundary.

\subsection{Generic boundaries for an odd chain} \label{sec:odd}

If the length of the chain is odd (i.e., there is an integer $\ell$ such that $L=2\ell+1$), we can chose the number of excitations as $M=\frac{L-1}{2}=\ell$
and the coef\/f\/icient in front of $\mathbb{B}(\boldsymbol{z})B(x,M+1)\Omega$ vanishes.
From the explicit expression \eqref{eq:repB} of $B(x,j)$, we see that
\begin{gather*}
\mathbb{V}(\boldsymbol{z})\in V_{1/2} \oplus V_{3/2}\oplus\dots \oplus V_{L/2} \qquad\text{for} \quad M=\ell,
\end{gather*}
which is one of the subspaces stabilized by the transfer matrix, as explained previously.
Then, the eigenvalues for this sector are given by \eqref{eq:vpd} with the Bethe equations \eqref{eq:bethed} for $M=\ell$.

The spectrum and the associated eigenvectors for the other sector \eqref{eq:sec2} are obtained from the Bethe vectors
\begin{gather*}
 \overline{\mathbb{V}}(\boldsymbol{z})= {C}(z_1,0) {C}(z_2,-1)\cdots {C}(z_\ell,-\ell+1) \begin{pmatrix}
 0\\1
 \end{pmatrix}^{\otimes L}.
\end{gather*}
It is easy to show that $\overline{\mathbb{V}}(\boldsymbol{z})$ belongs to the second sector.
A relation similar to \eqref{eq:offsg} by replacing ${\mathbb{V}}(\boldsymbol{z})$ by $\overline{\mathbb{V}}(\boldsymbol{z})$ can be proved.
Then, we show that, starting from $\overline{\mathbb{V}}(\boldsymbol{z})$, one gets exactly
the expression \eqref{eq:vpd} for the eigenvalues and \eqref{eq:bethed} for the Bethe equations.
Therefore, the spectrums in both sectors are the same.

Let us also mention that the separation into two sectors is not new in this context. Indeed, the complete spectrum of the XXZ spin chain with
certain constraints between the parameters of the boundaries is also obtained from two dif\/ferent Bethe vectors \cite{YanZ07}.

\subsection{Generic boundaries for an even chain} \label{sec:even}

For a chain with an even length, we must deal with the term $\mathbb{B}(\boldsymbol{z})B(x,M+1)\Omega$.
For a generic~$M$, this vector has no particular property but for $M=L$, special feature appears. This is the crucial point of
the modif\/ied algebraic Bethe ansatz.

Firstly, let us state the following lemma:
\begin{Lemma}\label{lem:M}
The entries of the matrix
 \begin{gather*}
 {\mathcal M}(z)=
 \prod_{q=1}^{\genfrac{}{}{0pt}{2}{\longrightarrow}{L}}\left(\frac{(z-v_q)(z-1/v_q)}{(z-z_q)(z-1/z_q)} B(z_q,q) \right)\\
 \hphantom{{\mathcal M}(z)=}{}\times\left(\frac{(z-1/z)}{(z-x)(z-1/x)} B(x,L+1) \right)
 \big( \delta(z) B(z,L+1)\big)^{-1}
\end{gather*}
have only simple poles at $z=0,\infty, z_1,z_2,\dots,z_{L}, 1/z_1,1/z_2,\dots,1/z_{L},x,1/x$.
\end{Lemma}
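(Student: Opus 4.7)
The plan is to split $\mathcal M(z)$ into a scalar times an operator and analyse the two pieces separately. Writing
\[
\mathcal M(z)=\phi(z)\,\mathbb B(\boldsymbol z)\,B(x,L+1)\,K(z)^{-1},\qquad K(z):=\delta(z)\,B(z,L+1),
\]
with
\[
\phi(z):=\frac{(z-1/z)\,P(z)}{(z-x)(z-1/x)\prod_{q=1}^L(z-z_q)(z-1/z_q)},\qquad P(z):=\prod_{q=1}^L(z-v_q)(z-1/v_q),
\]
the middle factor $\mathbb B(\boldsymbol z)B(x,L+1)$ is independent of $z$. A direct inspection shows that $\phi(z)$ has simple poles precisely at $z=0,x,1/x,z_q,1/z_q$, simple zeros at $v_q,1/v_q$ and at $\pm1$, and decays like $1/z$ at infinity. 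It therefore remains to control $K(z)^{-1}$: I need to verify that its only singularities are simple poles at $v_q,1/v_q$, which are exactly what the zeros of $\phi(z)$ will eliminate.

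Using the explicit form \eqref{eq:repB} together with the identities $\delta(z)b(z)=-\alpha$ and $\delta(z)f(v_j,z)=\omega(v_j,z)\delta(v_j)$, I would rewrite
\[
K(z)=\alpha\bigg((2L+1)-\sum_{j=1}^L\sigma^z_j\bigg)+\sum_{j=1}^L\omega(v_j,z)\,\delta(v_j)\,\sigma^-_j=:D+N(z).
\]
In the eigenbasis of $\sum_j\sigma^z_j$, $D$ is $z$-independent and diagonal with nonzero entries $\alpha((2L+1)-J)$, while $N(z)$ is strictly lower triangular and carries all $z$-dependence through the $\omega(v_j,z)$'s, each of which has only simple poles at $z=v_j,1/v_j$. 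Thus $K(z)$ is a well-defined invertible operator away from $\{v_j,1/v_j\colon j=1,\ldots,L\}$.

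The inverse is then expressed by the finite Neumann series
\[
K(z)^{-1}=\sum_{k=0}^L(-1)^k D^{-1}\bigl(N(z)\,D^{-1}\bigr)^k.
\]
Because $(\sigma^-_j)^2=0$ and the $\sigma^-_j$ mutually commute, every nonvanishing monomial in $N(z)^k$ is of the form $\prod_{m=1}^k\omega(v_{j_m},z)$ with $j_1,\ldots,j_k$ pairwise distinct. Since the singular loci of the distinct $\omega(v_j,z)$'s are disjoint, such a product has simple poles only at $\{v_{j_m},1/v_{j_m}\}$, so its denominator divides $P(z)$. Consequently every matrix entry of $K(z)^{-1}$ is a rational function of $z$ whose denominator divides $P(z)$, having at most simple poles at $v_j,1/v_j$ and being regular elsewhere.

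The lemma follows by combining these facts: the simple zeros of $P(z)$ in the numerator of $\phi(z)$ cancel the simple poles of $K(z)^{-1}$ at $v_j,1/v_j$, so $\mathcal M(z)$ is regular there; at every other point its singularities are inherited from $\phi(z)$ and are the listed simple poles at $0,x,1/x,z_q,1/z_q$; and at infinity the decay $\phi(z)=O(1/z)$ together with $K(z)^{-1}\to D^{-1}$ gives regularity. The main obstacle is the denominator bound for $K(z)^{-1}$: a naive estimate based on iterating $N(z)D^{-1}$ would admit poles of order up to $k+1$, and collapsing this to a simple pole requires the simultaneous use of the nilpotency and commutativity of the $\sigma^-_j$ together with the disjointness of the pole loci of the various $\omega(v_j,z)$.
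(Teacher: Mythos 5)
Your proof is correct and follows essentially the same route as the paper: both compute $\delta(z)B(z,L+1)$ explicitly, exploit that it is lower triangular with a constant invertible diagonal part, and use $(\sigma^-_j)^2=0$ to show that the poles of its inverse at $v_j$, $1/v_j$ stay simple, so that they are cancelled by the factors $(z-v_q)(z-1/v_q)$ in the prefactor. The only (cosmetic) differences are that you express the inverse via a terminating Neumann series where the paper invokes Cayley--Hamilton, and you handle $z=\infty$ by a direct degree count instead of the symmetry ${\mathcal M}(z)=-z^{-2}{\mathcal M}(1/z)$.
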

\begin{proof}
The proof of the existence of the poles of ${\mathcal M}(z)$ at $z=z_1,z_2,\dots,z_{L}, 1/z_1,1/z_2,\dots,1/z_{L}$ and $x$, $1/x$ is straightforward.
One gets
\begin{gather*}
\widehat B(z)=\delta(z) B(z,L+1)=\alpha(2L+1) -\alpha \sum_{j=1}^L \sigma^z_j+\sum_{j=1}^L \frac{2z(v_j-1/v_j)}{(z-v_j)(z-1/v_j)} \delta(v_j) \sigma^-_j .
\end{gather*}
This matrix $\widehat B(z)$ is a lower triangular matrix with the diagonal entries in the set $\{\alpha(L+1)$, $\alpha(L+3),\dots, \alpha(3L+1)\}$. In particular, this shows that it is invertible which justif\/ies the def\/inition of~${\mathcal M(z)}$. Due to the Cayley--Hamilton theorem, $\widehat B(z)^{-1}$ is a polynomial of~$\widehat B(z)$. Then, by using the fact that~$(\sigma^-)^2=0$, we prove that the poles at $z=v_j$ and $z=1/v_j$ of~$\widehat B(z)^{-1}$ remain simple. Then, ${\mathcal M}(z)$ has no pole at these points. In addition, its becomes a~non-singular diagonal matrix for $z=0$ which proves the simplicity of the pole of ${\mathcal M}(z)$ at $z=0$. The existence of the pole at $z=\infty$ is proven using the property $ {\mathcal M}(z)=-1/z^2 {\mathcal M}(1/z)$ which concludes the proof.
\end{proof}

Similar technical lemma has been proven and used in \cite{cra}. We are now in position to provide the following proposition:
\begin{Proposition}\label{pro:fon}
The following relation holds
\begin{gather*}
 \mathbb{B}(\boldsymbol z)B(x,L+1)\Omega=
 \widehat \lambda(x) \mathbb{V}(\boldsymbol{z})+
 \sum_{p=1}^L \frac{(z_p-1/z_p)\operatorname{Res}_{x=z_p}\big(\widehat \lambda(x)\big)}{(z_p-x)(z_p-1/x)} \mathbb{V}(\boldsymbol{z}_p,x)
\end{gather*}
 with $\boldsymbol z=\{z_1,z_2,\dots, z_L\}$ and
\begin{gather*}
\widehat \lambda(x)=\frac{\alpha(L+1)}{\delta(x)}
\prod_{q=1}^{L}\left(\frac{(x-v_q)(x-1/v_q)}{(x-z_q)(x-1/z_q)} \right).
\end{gather*}
\end{Proposition}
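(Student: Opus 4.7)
My plan is to exploit Lemma \ref{lem:M} via the partial-fraction expansion of $\mathcal M(z)$, then extract the identity by evaluating the built-in operator equality
\begin{gather*}
\mathcal M(z)\widehat B(z) = F(z)\,\mathbb B(\boldsymbol z) B(x, L+1)
\end{gather*}
on the pseudo-vacuum, where $F(z)=\prod_{q=1}^{L}\frac{(z-v_q)(z-1/v_q)}{(z-z_q)(z-1/z_q)}\cdot\frac{z-1/z}{(z-x)(z-1/x)}$ is the scalar prefactor from the definition of $\mathcal M(z)$. Since Lemma \ref{lem:M} ensures $\mathcal M(z)$ is rational with only simple poles at $\{0, z_p, 1/z_p, x, 1/x\}$, and the inversion symmetry $\mathcal M(z) = -z^{-2}\mathcal M(1/z)$ shows $\mathcal M(z) = O(1/z)$ at infinity, the partial-fraction expansion takes the form
\begin{gather*}
\mathcal M(z) = \frac{R_0}{z} + \sum_{p=1}^L\!\left(\frac{R_p}{z-z_p} + \frac{R'_p}{z-1/z_p}\right) + \frac{R_x}{z-x} + \frac{R'_x}{z-1/x},
\end{gather*}
and the same symmetry, via the substitution $z = 1/w$ in the residue integral, forces $R'_p = R_p$ and $R'_x = R_x$.

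Next I would compute $R_x$ and $R_p$ in closed form, relying on the crucial simplification $B(y, L+1)\widehat B(y)^{-1} = 1/\delta(y)$ (immediate from $\widehat B(y) = \delta(y)\,B(y, L+1)$). At $z=x$ only $F(z)$ is singular, so
\begin{gather*}
R_x = \bigl(\operatorname{Res}_{z=x} F(z)\bigr)\,\mathbb B(\boldsymbol z)\,B(x, L+1)\,\widehat B(x)^{-1} = \frac{\widehat\lambda(x)}{\alpha(L+1)}\,\mathbb B(\boldsymbol z).
\end{gather*}
At $z = z_p$, I would first use the permutation invariance of the $B$-product (from \eqref{eq:com2}) to rewrite $\mathbb B(\boldsymbol z) B(x, L+1) = \mathbb B(\boldsymbol z_p, x)\,B(z_p, L+1)$, so the same collapse yields
\begin{gather*}
R_p = \frac{(z_p - 1/z_p)\operatorname{Res}_{x=z_p}\widehat\lambda(x)}{\alpha(L+1)(z_p-x)(z_p-1/x)}\,\mathbb B(\boldsymbol z_p, x).
\end{gather*}

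To close the argument, set $V := \mathbb B(\boldsymbol z) B(x, L+1)\Omega$ and apply $\mathcal M(z)\widehat B(z)\Omega = F(z)V$. From \eqref{eq:repB}, $\widehat B(0)\Omega = \alpha(L+1)\Omega$ (the $\sigma^-_j$-contributions carry a factor $z$ that vanishes at $0$) and $\operatorname{Res}_{z=0}F(z) = -1$, so the residue at $z=0$ gives $\alpha(L+1) R_0\Omega = -V$. As $z\to\infty$, $\widehat B(z)\Omega\to\alpha(L+1)\Omega$ and $F(z) = z^{-1} + O(z^{-2})$, so matching the $1/z$ coefficient gives $\alpha(L+1)\bigl(\sum_\ast R_\ast\bigr)\Omega = V$. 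Subtracting the first relation and using $R'_p = R_p$, $R'_x = R_x$ leaves $\alpha(L+1)(\sum_p R_p + R_x)\Omega = V$; substituting the explicit expressions from the previous paragraph reproduces the stated identity. The main obstacle I anticipate is the verification of the pairings $R'_p = R_p$ and $R'_x = R_x$ from the inversion symmetry, which requires careful tracking of the orientation and Jacobian in the change of variable $z = 1/w$; the rest of the argument is routine scalar residue bookkeeping combined with \eqref{eq:com2}.
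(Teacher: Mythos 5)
Your proof is correct and follows essentially the same route as the paper's: it rests on Lemma~\ref{lem:M}, the explicit residues of $\mathcal M(z)$ at $z=0,z_p,x$ (your collapse $B(y,L+1)\widehat B(y)^{-1}=1/\delta(y)$ together with the reordering via~\eqref{eq:com2} simply makes explicit what the paper calls ``straightforward computations''), and the pairing of residues at $z_0$ and $1/z_0$ via $\mathcal M(z)=-z^{-2}\mathcal M(1/z)$. Your final step of matching the $1/z$ coefficient at infinity is exactly the paper's closing argument that the sum of all residues (including the point at infinity) of a rational function vanishes.
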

\begin{proof}
 The residues of $\mathcal M(z)\Omega$ are obtained by straightforward computations:
 \begin{gather*}
 \operatorname{Res}_{z=z_p}(\mathcal M(z)\Omega)=\frac{1}{\delta(z_p)}
 \prod_{\genfrac{}{}{0pt}{2}{q=1}{q\neq p}}^{L}
 \left(\frac{(z_p-v_q)(z_p-1/v_q)}{(z_p-z_q)(z_p-1/z_q)} \right) \left(\frac{(z_p-v_p)(z_p-1/v_p)}{(z_p-x)(z_p-1/x)} \right)
\mathbb{V}(\boldsymbol{z}_p,x),\\
\operatorname{Res}_{z=x}(\mathcal M(z)\Omega) = \frac{1}{\delta(x)}
\prod_{q=1}^{L}\left(\frac{(x-v_q)(x-1/v_q)}{(x-z_q)(x-1/z_q)} \right) \mathbb{V}(\boldsymbol{z}),\\
\operatorname{Res}_{z=0}(\mathcal M(z)\Omega) = -\frac{1}{\alpha(L+1)}\mathbb{B}(\boldsymbol{z})B(x,L+1)\Omega.
 \end{gather*}
By using $ {\mathcal M}(z)=-1/z^2 {\mathcal M}(1/z)$, we deduce that $\operatorname{Res}_{z=1/z_0}(\mathcal M(z)\Omega)=\operatorname{Res}_{z=z_0}(\mathcal M(z)\Omega)$ where $z_0=0$, $z_1$, $z_2,\dots, z_L$, $x$. The proof is concluded by using the fact that the sum over all the residues (with the point at inf\/inity) of a rational function vanishes.
\end{proof}

Therefore, by using relation \eqref{eq:offsg} for $M=L$ and the result of Proposition \ref{pro:fon}, we prove that $\mathbb{V}(\boldsymbol z)$
(for $\boldsymbol z=\{z_1,z_2,\dots, z_L\}$) is an eigenvector of the transfer matrix $t(x)$ if the Bethe roots satisfy the following Bethe equations
\begin{eqnarray}\label{eq:bg}
a(z_p)+\nu(z_p) +\sum_{\genfrac{}{}{0pt}{1}{q=1}{q\neq p}}^L\omega(z_p,z_q)=\frac{(L+1)c(z_p) }{4 z_p} \operatorname{Res}_{x=z_p}\big(\widehat \lambda(x)\big),
\qquad\text{for }p=1,2,\dots,L.
\end{eqnarray}
To get the previous relation, we have used the functional relation
\begin{gather*}
-\frac{(L+1)(z-1/z)c(x)}{2f(1/z,1/x)(z-x)(z-1/x)}=\frac{(L+1)c(z) }{4 z}.
\end{gather*}
Explicitly, they become
\begin{gather*}
 -\frac{1}{2}\sum_{j=1}^L\left(\frac{z_p+v_j}{z_p-v_j}+\frac{z_pv_j+1}{z_pv_j-1}\right)
+\nu(z_p)+\sum_{\genfrac{}{}{0pt}{1}{q=1}{q\neq p}}^L\left(\frac{z_p+z_q}{z_p-z_q}+\frac{z_pz_q+1}{z_pz_q-1}\right)
\nonumber\\
\qquad{}=
\frac{(\rho^2-\beta^2)(L+1)^2}{16\rho^2} \frac{(z_p-v_p)(z_p-1/v_p)}{\delta(z_p)\delta(1/z_p)(z_p^2-1)}\prod_{\genfrac{}{}{0pt}{1}{q=1}{q\neq p}}^L
\frac{(z_p-v_q)(z_p-1/v_q)}{(z_p-z_q)(z_p-1/z_q)},
\end{gather*}
where $\nu(z)$ is given by \eqref{eq:nu}.
The corresponding eigenvalue is
\begin{gather}\label{eq:eig}
 \lambda(x)-\frac{L+1}{2}c(x)\widehat\lambda(x)
+\sum_{p=1}^M\omega(x,z_p)\left(a(x)+\nu(x) +\sum_{q\neq p}\frac{(x-1/x)\omega(z_p,z_q)}{z_p-1/z_p}\right).
\end{gather}
Then the l.h.s.\ of the Bethe equations is similar to the ones for the diagonal or triangular boundaries. The r.h.s.\ corresponds to the inhomogeneous part of the TQ relation discovered in~\cite{CYSW1,CYSW1+} to solve the XXZ spin chain with twist or boundaries.

\section{Higher spin Gaudin models} \label{sec:HS}

In this section, we generalize the previous construction and consider the higher spin Gaudin models. Indeed, it is well-known that we can f\/ind representations of the algebra $\cR$ using dif\/ferent representations of $\mathfrak{sl}(2)$ at each sites. Indeed, the following matrix $\widetilde{\cK}(x)$ satisf\/ies the def\/ining relation \eqref{eq:Al}:
 \begin{gather}\label{eq:rep2}
 \widetilde{\cK}_0(x)=\sum_{j=1}^L \cL^{(s_j)}_{0j}(x,v_j),
\end{gather}
where $s_j=1/2,1,3/2,\dots $ is the spin of the representation at the sites $j$ and
\begin{gather}\label{eq:Ls}
 \cL^{(s)}(x,y)=\begin{pmatrix}
 - \omega(x,y) S^z& 2b(x) S^z+ f(y,x) S^-\\
 2c(x) S^z
 - f(1/y,1/x) S^+&\omega(x,y)S^z
 \end{pmatrix}.
\end{gather}
In \eqref{eq:Ls}, $S^z$, $S^+$ and $S^-$ are the usual matrices representing the $\mathfrak{sl}(2)$ generators for spin $s$. The only necessary knowledge
about these matrices for the following constructions is the fact that there exists a highest-weight vector $w_s$ such that
\begin{gather*}
 S^z w_s = s w_s\qquad \text{and}\qquad S^+ w_s = 0 \qquad\text{for}\quad s=1/2,1,3/2,\dots
\end{gather*}
For $s_1,s_2, \dots, s_L=\frac{1}{2}$, the representation \eqref{eq:rep2} becomes \eqref{eq:rep}. Let us emphasize that the representation of $\mathfrak{sl}(2)$
may be dif\/ferent at each sites.

Now, we can generalize the results of Section \ref{sec:maba}. All the results of Section \ref{sec:alg} remains unchanged since it is independent of the representation.

 The pseudo-vacuum becomes\footnote{In this section, we keep the same notations for all the objects which generalize the objects of the previous sections.}
 \begin{gather*}
 \Omega=w_{s_1} \otimes w_{s_2} \otimes \dots \otimes w_{s_L}.
 \end{gather*}
 Then, relation \eqref{eq:offsg} is formally unchanged except the def\/inition of the function $a(x)$ which becomes
 \begin{gather}\label{an}
 a(x)=-\sum_{j=1}^L \omega(x,v_j)s_j
 \end{gather}
 and the $L$ in the f\/irst line must be replaced by $\overline{L}=2\sum\limits_{p=1}^L s_p$.

 Then, there is again three possibilities to pursue the computation:
 \begin{itemize}\itemsep=0pt
 \item The triangular case corresponding to $c(x)=0$. The Bethe equations and the eigenvalues are given respectively by \eqref{eq:bethed} and \eqref{eq:vpd}
 with $a(x)$ given by \eqref{an}. The number $M$ ranges now from 0 to $\overline{L}$.
 \item The odd case $M=\frac{\overline{L}-1}{2}$. The discussion of Section \ref{sec:odd} is still valid.
 \item The even case. To deal with this case, we must generalize the Lemma \ref{lem:M}.
 This lemma is still valid if we def\/ine ${\mathcal M}(z)$ as follows
 \begin{gather*}
 {\mathcal M}(z) =
\prod_{q=1}^L (z-v_q)^{2s_q}(z-1/v_q)^{2s_q}
\prod_{q=1}^{\genfrac{}{}{0pt}{2}{\longrightarrow}{\overline{L}}} \frac{B(z_q,q)}{(z-z_q)(z-1/z_q)} \nonumber\\
\hphantom{{\mathcal M}(z) =}{} \times \frac{(z-1/z)}{(z-x)(z-1/x)} B(x,\overline{L}+1)
 \big( \delta(z) B(z,\overline{L}+1)\big)^{-1} .
\end{gather*}
Let us emphasize that the f\/irst product is up to $L$, the number of sites, and the second up to $\overline{L}$. The demonstration is very
similar to the one given previously except we must use $(S^-)^{2s+1}= 0$ for a spin $s$ representation. Then, the Bethe equations and
the eigenvalues are given respectively by \eqref{eq:bg} and \eqref{eq:eig} with $L$ replace by $\overline{L}$, $a(z)$ given by \eqref{an} and
\begin{gather*}
\widehat \lambda(x)=\frac{\alpha(\overline{L}+1)}{\delta(x)}
\prod_{q=1}^L (x-v_q)^{2s_q}(x-1/v_q)^{2s_q}\prod_{q=1}^{\overline{L}} \frac{1}{(x-z_q)(x-1/z_q)}.
\end{gather*}
 \end{itemize}

\subsection*{Acknowledgements}

I thank P.~Baseilhac, S.~Belliard and V.~Caudrelier for their interest.
This work has been done during the stay of the author at the ``Laboratoire de Math\'ematiques et Physique Th\'eorique CNRS/UMR 7350,
Universit\'e de Tours''. I thank warmly the LMPT for hospitality. % where this work has been done.

\pdfbookmark[1]{References}{ref}
\LastPageEnding

\end{document}